\ifLuaTeX\usepackage[utf8]{luainputenc}\else\usepackage[utf8]{inputenc}\fi
\newcommand{\doublewidetilde}[1]{{%
		\mathpalette\double@widetilde{#1}%
	}}
	\def\BState{\State\hskip-\ALG@thistlm}
	\newtheorem{theorem}{Theorem}
	\newtheorem{corollary}{Corollary}
	\newcounter{eqnback}
	\newcounter{eqncnt}
	\newcommand{\tr}{\ensuremath{\mathsf{T}}}
	\newcommand{\conjtr}{\ensuremath{\mathsf{H}}}
	\newcommand{\Exp}{\ensuremath{\mathbb{E}}}
	\newcommand{\abb}[1]{#1} 
	\newcommand{\defas}{\triangleq}
\begin{document}
		%
		\title{Two-Layer Decoding in Cellular Massive MIMO Systems with Spatial Channel Correlation}


	\author{\IEEEauthorblockN{Trinh~Van~Chien, Christopher~Moll\'{e}n, Emil~Bj\"{o}rnson}
		\IEEEauthorblockA{Department of Electrical
			Engineering (ISY), Link\"{o}ping University, SE-581 83 Link\"{o}ping, Sweden\\
			\{trinh.van.chien, emil.bjornson\}@liu.se, chris.mollen@gmail.com}
		\thanks{This paper was supported by the European Union's Horizon 2020 research and innovation programme under grant agreement No 641985 (5Gwireless). It was also supported by ELLIIT and CENIIT.}
	}
		
		\maketitle
		
		\begin{abstract}
		 This paper studies a two-layer decoding method that mitigates inter-cell interference in multi-cell Massive \abb{MIMO} systems. In layer one, each base station (\abb{BS}) estimates the channels to intra-cell users and uses the estimates for local decoding on each \abb{BS}, followed by a second decoding layer where the \abb{BS}s cooperate to mitigate inter-cell interference. An uplink achievable spectral efficiency (\abb{SE}) expression is computed for arbitrary two-layer decoding schemes, while a closed-form expression is obtained for correlated Rayleigh fading channels, maximum-ratio combining (MRC), and large-scale fading decoding (\abb{LSFD}) in the second layer.  We formulate a non-convex sum \abb{SE} maximization problem with both the data power and \abb{LSFD} vectors as optimization variables and develop an algorithm based on the weighted \abb{MMSE} (minimum mean square error) approach to obtain a stationary point with low computational complexity.
		\end{abstract}
		
	   \IEEEpeerreviewmaketitle
		
		\section{Introduction}

		Massive \abb{MIMO} is an emerging technology to handle the growing demand for wireless data traffic in the next generation cellular networks \cite{Andrews2014a}. A Massive \abb{MIMO} \abb{BS} is equipped with hundreds of antennas to spatially multiplex a large number of users on the same time--frequency resource \cite{marzetta2010noncooperative}. In a single-cell scenario, there is no need for computationally heavy decoding (or precoding methods) in Massive \abb{MIMO} as both the thermal noise and mutual interference are effectively suppressed by linear processing, e.g., maximum-ratio combining (\abb{MRC}) or  regularized zero-forcing (RZF) combining, with a large number of \abb{BS} antennas \cite{Marzetta2016a}. In a multi-cell scenario, however, pilot-based channel estimation is contaminated by the non-orthogonal transmission in other cells.  This results in coherent intercell interference in the data transmission, so-called \emph{pilot contamination} \cite{Jose2011b}, unless high-complexity signal processing schemes are used to suppress it \cite{bjornson2018a}. Pilot contamination reduces the benefit of having many antennas and the \abb{SE}s achieved by low-complexity \abb{MRC} or RZF saturate as the number of antennas grows. 
		
	   Much research has been dedicated to mitigating the effects of pilot contamination; for example by increasing the length of the pilots \cite{Bjornson2016a} or assigning the pilots in a way that reduces the contamination \cite{Jin2015a}. In practical networks, however, it is not possible to make all pilots orthogonal due to the limited channel coherence block \cite{Bjornson2016a}. Besides, the pilot assignment is a combinatorial problem. Even though heuristic algorithms with relatively low complexity can be developed, this approach still suffers from the asymptotic \abb{SE} saturation since we only change one contaminating user for a less contaminating user. Instead of combating pilot contamination, one can utilize decoding schemes where the BSs are cooperating \cite{adhikary2017a}.  In the two-layer \abb{LSFD} (\textit{large-scale fading decoding}) framework, each \abb{BS} applies an arbitrary local linear decoding method in the first layer. The results are then gathered at a common central station that applies so-called \abb{LSFD} vectors in a second-layer to combine the signals from multiple \abb{BS}s to suppress pilot contamination and other inter-cell interference. The \abb{LSFD} vectors are selected only based on the channel statistics (large-scale fading) and, therefore, there is no need for the \abb{BS}s to share their local channel estimates.
 The new decoding design attains high \abb{SE} even with a limited number of \abb{BS} antennas \cite{adhikary2017a}. Previous works on LSFD has either considered uncorrelated Rayleigh fading channels \cite{nayebi2016,adhikary2017a} or special correlated Rayleigh fading based on the one-ring model \cite{Adhikary2018a}. The latter paper optimizes the system with respect to network-wide max-min fairness, which is a criterion that gives all the users the same \abb{SE}, but usually a very low such \abb{SE} \cite{Bjornson2017bo}. 
 
	   In this paper, we generalize the \abb{LSFD} method from \cite{adhikary2017a,Adhikary2018a} to a scenario with arbitrary spatial correlation and also develop a method for joint power control and \abb{LSFD} vector optimization in the system using the sum \abb{SE} as the utility. 
	    We first quantify the  \abb{SE} in a system with arbitrary processing in the two layers and then derive a closed-form expression for the case when \abb{MRC} is used in the first  layer.
	     The \abb{LSFD} vector that maximizes the \abb{SE} follows in closed form. Additionally, an uplink sum \abb{SE} optimization problem with power constraints is formulated. Because it is a hard non-convex problem, we are not searching for the global optimum but develop an alternating low-complexity optimization algorithm that converges to a stationary point. Numerical results demonstrate the effectiveness of the optimized system for Massive \abb{MIMO} systems with correlated Rayleigh fading.
		
		\textit{Notation}: Lower and upper case bold letters are used for vectors and matrices.  The expectation of a random variable $X$ is denoted by $\Exp \{X \}$ and  $\| \cdot \|$ is the Euclidean norm. The transpose and Hermitian transpose of a matrix are written as $(\cdot)^\tr$ and $(\cdot)^\conjtr$, respectively. The $L\times L$\mbox{-}dimensional diagonal matrix with the diagonal elements $d_1, d_2,\ldots,d_L$ is denoted $\operatorname{diag}(d_1,d_2,\ldots,d_L)$.  Finally, $\mathcal{CN}(\cdot, \cdot)$ is  circularly symmetric complex Gaussian distribution. 
		
		\section{System Model} \label{Section: System Model}
		
		We consider a cellular network with $L$~cells. Each cell consists of a \abb{BS} equipped with $M$ antennas that serves $K$ single-antenna users. The channel vector in the uplink between user~$k$ in cell~$l$ and \abb{BS}~$l'$ is denoted by $\mathbf{h}_{l,k}^{l'} \in \mathbb{C}^M$. We consider the standard block-fading model \cite{Bjornson2017bo}, where the channels are static within a coherence block of size $\tau_\mathrm{c}$ channel uses and take an independent realization in each blocks, 
		according to a stationary ergodic random process. Since practical channels are spatially correlated, we assume that each channel follows a correlated Rayleigh fading model:
		\begin{equation}
		\mathbf{h}_{l,k}^{l'} \sim \mathcal{CN} \left( \mathbf{0}, \mathbf{R}_{l,k}^{l'} \right),
		\end{equation}
		where $\mathbf{R}_{l,k}^{l'} \in \mathbb{C}^{M \times M}$ is the \emph{spatial correlation matrix}. 
		The \abb{BS}s know the channel statistics, 
		but have no prior knowledge of the channel realizations, which need to be estimated in every coherence block.

		\subsection{Channel Estimation}
		As in conventional Massive \abb{MIMO} \cite{bjornson2018a},  
		the channels are estimated by letting the users transmit $K$-symbol long \emph{pilots} in a dedicated part of the coherence block, called the \emph{pilot phase}. All the cells share a common set of $K$ mutually orthogonal pilots $\{\pmb{\phi}_1, \ldots, \pmb{\phi}_K \}$ with $\| \pmb{\phi}_k \|^2 = K$.	Without loss of generality, we assume that  the users in different cells that have the same index use the same pilot and thereby cause pilot contamination to each other \cite{marzetta2010noncooperative}. During the pilot phase, at BS~$l$  the received signal in the pilot phase is denoted $\mathbf{Y}_l \in \mathbb{C}^{M\times K}$ and it is given by
		\begin{equation}
		\mathbf{Y}_l = \sum_{l'=1}^L \sum_{k=1}^K \sqrt{\hat{p}_{l',k}}  \mathbf{h}_{l',k}^l \boldsymbol{\phi}_{k}^\conjtr + \mathbf{N}_l,
		\end{equation}
		where $\hat{p}_{l',k}$ is the power of the pilot transmitted by user~$k$ in cell~$l'$ and $\mathbf{N}_l$ is a matrix of independent and identically distributed noise terms, each distributed as $\mathcal{CN}(0,\sigma^2)$. An observation of the channel from user~$k$ to BS~$l$ is obtained by using standard \abb{MMSE} estimation \cite{Bjornson2017bo}. 
		The channel estimates are used at BS~$l$ to compute decoding vectors for detecting the signals from the $K$ intra-cell users. 
		
		\subsection{Uplink Data Transmission}
		
		In the \emph{data phase}, it is assumed that user~$k$ in cell~$l'$ sends a zero-mean information symbol~$s_{l',k}$ with power $\Exp \{ |s_{l',k}|^2 \} = 1$. 
		The received signal $\mathbf{y}_l \in \mathbb{C}^M $ at BS~$l$ is then
		\begin{equation} \label{eq:ReceivedSigBSl}
		\mathbf{y}_l = \sum_{l'=1}^L \sum_{k=1}^K \sqrt{p_{l',k}} \mathbf{h}_{l',k}^l s_{l',k} + \mathbf{n}_l,
		\end{equation}
		where $p_{l',k}$ denotes the transmit power of user~$k$ in cell~$l'$. 
		Based on the signals in \eqref{eq:ReceivedSigBSl}, the BSs decode the symbols using the two-layers-decoding technique that is illustrated in Fig.~\ref{fig:decoder}. 
		In the first layer, an estimate of the symbol from user~$k$ in cell~$l$ is obtained at BS~$l$ by local linear decoding as
		\begin{equation}
		\tilde{s}_{l,k} = \mathbf{v}_{l,k}^\conjtr \mathbf{y}_l = \sum_{l'=1}^L \sum_{k'=1}^K \sqrt{p_{l',k'}} \mathbf{v}_{l,k}^\conjtr \mathbf{h}_{l',k'}^l s_{l',k'} + \mathbf{v}_{l,k}^\conjtr \mathbf{n}_l,
		\end{equation} 
		where $\mathbf{v}_{l,k}$ is the \emph{linear decoding vector}. The symbol estimate $\tilde{s}_{l,k}$ contains interference and noise. In particular, the coherent interference caused by pilot contamination from pilot-sharing users in other cells is large in  Massive \abb{MIMO}. To mitigate the inter-cell interference, all the symbol estimates of the pilot-sharing users are collected in a vector 
		\begin{equation}
		\tilde{\mathbf{s}}_{k} \defas [\tilde{s}_{1,k}, \tilde{s}_{2,k}, \ldots,  \tilde{s}_{L,k} ]^\tr \in \mathbb{C}^L.
		\end{equation}
		After the local decoding, a second layer of centralized decoding is performed. 
		The final estimate of the data symbol from user~$k$ in cell~$l$ is obtained as		\begin{equation} \label{eq:919838377}
		\begin{split}
		& \hat{s}_{l,k} = \mathbf{a}_{l,k}^\conjtr \tilde{\mathbf{s}}_{k} = \sum_{l'=1}^L (a_{l,k}^{l'})^{\ast} \tilde{s}_{l',k}
		\end{split}
		\end{equation}
		where $\mathbf{a}_{l,k} \defas [a_{l,k}^1, a_{l,k}^2, \ldots, a_{l,k}^L]^\tr \in \mathbb{C}^L$ is called the \emph{\abb{LSFD} vector} and $a_{l,k}^{l'}$ is the \emph{\abb{LSFD} weight}. {Unlike previous works, in our framework, arbitrary linear combining methods can be used in the first layer and the \abb{LSFD} vectors can still be optimized.} 
		
		In the next section, we use the decoded signals $\hat{s}_{l,k}$ together with the asymptotic channel properties \cite[Section~2.5]{Bjornson2017bo} to derive a closed-from expression for achievable uplink \abb{SE}.
		\begin{figure}[t]
			\centering
			\includegraphics[trim=0.0cm 0cm 0.0cm 0cm, clip=true, width=2.3in]{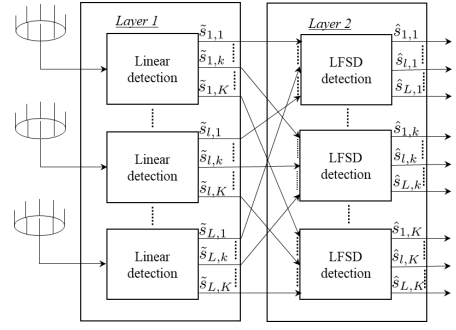} \vspace*{-0.15cm}
			\caption{Desired signals are detected by the two-layer decoding technique.}
			\label{fig:decoder}
			\vspace*{-0.5cm}
		\end{figure}
		\section{Uplink Performance Analysis} \label{Section:ULPerformance}
		This section first derives a  \abb{SE} expression that can be used for any decoding vector and then a closed-form expression when using \abb{MRC}. 
		These expressions are then used to obtain the \abb{LSFD} vectors that maximize the \abb{SE}. 
		The use-and-then-forget capacity bounding technique \cite[Chapter~2.3.4]{Marzetta2016a}, \cite[Section~4.3]{bjornson2018a} allows us to compute a lower bound on the uplink ergodic capacity (i.e., an achievable \abb{SE}) of user~$k$ in cell~$l$ as
			\begin{equation} \label{eq:GeneralSE}
			R_{l,k} = \max_{\{a_{l,k}^{l'}\}} \left( 1- \frac{K}{\tau_\mathrm{c}} \right) \log_2 \left(1 + \mathrm{SINR}_{l,k} \right),
			\end{equation}
			where the effective SINR, denoted by $\mathrm{SINR}_{l,k}$, is
			\begin{equation} \label{eq:GeneralSINR}
			 \frac{ \Exp\{ |\mathtt{DS}_{l,k}|^2 \} }{ \Exp \{ |\mathtt{PC}_{l,k} |^2 \} + \Exp \{ |\mathtt{BU}_{l,k}|^2\} +  \Exp \{ |\mathtt{NI}_{l,k}|^2\} + \Exp \{ |\mathtt{AN}_{l,k}|^2\}},
			\end{equation}
			where $\mathtt{DS}_{l,k}, \mathtt{PC}_{l,k}, \mathtt{BU}_{l,k},\mathtt{NI}_{l,k},$ and $\mathtt{AN}_{l,k}$ stand for the desired signal, the pilot contamination, the beamforming gain uncertainty, the non-coherent interference, and the additive noise, respectively, whose expectations are defined as
			\begin{align}
			\mathbb{E} \{ |\mathtt{DS}_{l,k }|^2 \} &\defas p_{l,k} \left| \sum_{l'=1}^L (a_{l,k}^{l'})^{\ast} \Exp \{\mathbf{v}_{l',k}^\conjtr \mathbf{h}_{l,k}^{l'} \}\right|^2 ,\\
			\mathbb{E} \{ |\mathtt{PC}_{l,k}|^2 \} &\defas  \sum_{\substack{l''=1 \\ l'' \neq l }}^L p_{l'',k} \left| \sum_{l'=1}^L (a_{l,k}^{l'})^{\ast}  \Exp \{\mathbf{v}_{l',k}^\conjtr \mathbf{h}_{l'',k}^{l'} \} \right|^2,\\
			\mathbb{E} \{|\mathtt{BU}_{l,k}|^2 \} &\defas \sum_{l'=1}^L p_{l',k} \Exp \Bigg\{ \Bigg| \sum_{l''=1}^L (a_{l,k}^{l''})^{\ast}  \Bigg( \mathbf{v}_{l'',k}^\conjtr \mathbf{h}_{l',k}^{l''} - \notag \\ & \quad \Exp \{\mathbf{v}_{l'',k}^\conjtr \mathbf{h}_{l',k}^{l''} \} \Bigg) \Bigg|^2 \Bigg\},\\
			\mathbb{E} \{ |\mathtt{NI}_{l',k'}|^2 \} &\defas \sum\limits_{l'=1}^{L} \sum\limits_{\substack{k'=1\\k'\neq k}}^K  p_{l',k'} \times \notag \\
			& \quad \Exp \left\{ \left| \sum_{l''=1}^L (a_{l,k}^{l''})^{\ast}  \mathbf{v}_{l'',k}^\conjtr \mathbf{h}_{l',k'}^{l''} \right|^2 \right\} , \\
			\mathbb{E} \{ |\mathtt{AN}_{l,k}|^2 \} &\defas \Exp \left\{ \left| \sum_{l'=1}^L (a_{l,k}^{l'})^\ast  (\hat{\mathbf{h}}_{l',k}^{l'})^\conjtr \mathbf{n}_{l'} \right|^2 \right\}.
			\end{align} 
		We notice that the SE expression in \eqref{eq:GeneralSE} can be applied together with any linear decoding method and any \abb{LSFD} vector, but the expectations have the evaluated numerically.
		
		Maximizing the \abb{SE} of user~$k$ in cell~$l$ is equivalent to selecting the \abb{LSFD} vector that maximizes a \textit{Rayleigh quotient}.
		\begin{theorem} \label{Theorem1v1}
			If \abb{MRC}, \abb{ZF} or \abb{RZF} is used, for a given set of pilot and data power coefficients, the \abb{SE} of user~$k$ in cell~$l$ is
			\begin{equation} \label{eq:OptimalRate}
			R_{l,k} = \left( 1- \frac{K}{\tau_\mathrm{c}} \right) \log_2 \left(1 +  p_{l,k} \mathbf{b}_{l,k}^\conjtr \left( \sum_{i=1}^4 \mathbf{C}_{l,k}^{(i)}  \right)^{-1} \mathbf{b}_{l,k} \right),
			\end{equation}
			where the matrices $\mathbf{C}_{l,k}^{(1)},\mathbf{C}_{l,k}^{(2)}, \mathbf{C}_{l,k}^{(3)}, \mathbf{C}_{l,k}^{(4)} \in \mathbb{C}^{L \times L}$ are 
			\begin{align}
			\mathbf{C}_{l,k}^{(1)} &\defas \sum_{\substack{l'=1 \\ l' \neq l }}^L p_{l',k} \mathbf{b}_{l',k} \mathbf{b}_{l',k}^\conjtr, \\
			\mathbf{C}_{l,k}^{(2)} &\defas \sum_{l'=1}^L p_{l',k} \Exp \left\{ \tilde{\mathbf{b}}_{l',k}  \tilde{\mathbf{b}}_{l',k}^\conjtr \right\},\\
			\mathbf{C}_{l,k}^{(3)} &\defas \mathrm{diag} \left(\sum_{l'=1}^L \sum_{\substack{k'=1 \\ k' \neq k}}^K  p_{l',k'} \Exp \left\{\left|  \mathbf{v}_{1,k}^\conjtr \mathbf{h}_{l',k'}^{1} \right|^2  \right\}, \ldots, \right. \notag \\& \left. \sum_{l'=1}^L \sum_{\substack{k'=1 \\ k' \neq k}}^K  p_{l',k'} \Exp \left\{\left|  \mathbf{v}_{L,k}^\conjtr \mathbf{h}_{l',k'}^{L} \right|^2  \right\} \right) \label{eq:C3}\\
			\mathbf{C}_{l,k}^{(4)} &\defas \mathrm{diag} \left( \sigma^2 \Exp \left\{ \| \mathbf{v}_{1,k} \|^2 \right\}, \ldots, \sigma^2  \Exp  \left\{ \| \mathbf{v}_{L,k} \|^2 \right\} \right),
			\end{align}
			and the vectors $\mathbf{b}_{l',k}, \tilde{\mathbf{b}}_{l',k} \in \mathbb{C}^L, \forall l' = 1,\ldots,L,$ are
			\begin{align}
			\mathbf{b}_{l',k} &\defas \left[ 
			\Exp  \{ \mathbf{v}_{1,k}^H \mathbf{h}_{l',k}^1 \}, \ldots, \Exp  \{ \mathbf{v}_{L,k}^H \mathbf{h}_{l',k}^L \} \right]^\tr, \label{eq:bikv1}\\
			\tilde{\mathbf{b}}_{l',k} &\defas \left[ \mathbf{v}_{1,k}^\conjtr \mathbf{h}_{l',k}^{1},  \ldots, \mathbf{v}_{L,k}^\conjtr \mathbf{h}_{l',k}^{L}\right]^\tr - \mathbf{b}_{l',k}.
			\end{align}
			In order to attain this \abb{SE}, the \abb{LSFD} vector is selected as
			\begin{equation} \label{eq:LSFDVector}
			\mathbf{a}_{l,k} = \left( \sum_{i=1}^4 \mathbf{C}_{l,k}^{(i)} \right)^{-1} \mathbf{b}_{l,k}, \quad\forall l,k.
			\end{equation}
		\end{theorem}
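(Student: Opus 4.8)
The plan is to show that, once the \abb{LSFD} vector $\mathbf{a}_{l,k}$ is factored out, the effective $\mathrm{SINR}_{l,k}$ in \eqref{eq:GeneralSINR} is a generalized Rayleigh quotient in $\mathbf{a}_{l,k}$ whose numerator matrix has rank one; the optimal $\mathbf{a}_{l,k}$ and the resulting \abb{SE} then follow from a standard maximization. First I would rewrite every expectation as a Hermitian quadratic form in $\mathbf{a}_{l,k}$. Collecting the conjugated weights and using definition \eqref{eq:bikv1}, the inner sum in $\mathtt{DS}_{l,k}$ equals $\mathbf{a}_{l,k}^\conjtr \mathbf{b}_{l,k}$, so $\Exp\{|\mathtt{DS}_{l,k}|^2\}=p_{l,k}\,\mathbf{a}_{l,k}^\conjtr \mathbf{b}_{l,k}\mathbf{b}_{l,k}^\conjtr \mathbf{a}_{l,k}$. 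The same substitution gives $\Exp\{|\mathtt{PC}_{l,k}|^2\}=\mathbf{a}_{l,k}^\conjtr \mathbf{C}_{l,k}^{(1)}\mathbf{a}_{l,k}$, while the centred vector $\tilde{\mathbf{b}}_{l',k}$ turns the beamforming-uncertainty term into $\Exp\{|\mathtt{BU}_{l,k}|^2\}=\mathbf{a}_{l,k}^\conjtr \mathbf{C}_{l,k}^{(2)}\mathbf{a}_{l,k}$.

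The two remaining terms require an additional diagonalization step. Writing them as $\mathbf{a}_{l,k}^\conjtr \Exp\{\mathbf{g}\mathbf{g}^\conjtr\}\mathbf{a}_{l,k}$ for the random vector $\mathbf{g}$ whose $m$th entry is $\mathbf{v}_{m,k}^\conjtr \mathbf{h}_{l',k'}^m$ (for $\mathtt{NI}_{l,k}$) or $\mathbf{v}_{m,k}^\conjtr \mathbf{n}_m$ (for $\mathtt{AN}_{l,k}$), I would argue that $\Exp\{\mathbf{g}\mathbf{g}^\conjtr\}$ is diagonal. The key observation is that the channels and the pilot/data noise observed at two distinct \abb{BS}s $m\neq n$ are mutually independent, so every off-diagonal entry factors into a product of two marginal expectations. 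For $\mathtt{AN}_{l,k}$ each factor vanishes because $\mathbf{n}_m$ is zero-mean and independent of $\mathbf{v}_{m,k}$; for $\mathtt{NI}_{l,k}$ each factor $\Exp\{\mathbf{v}_{m,k}^\conjtr \mathbf{h}_{l',k'}^m\}$ vanishes when $k'\neq k$, which I would establish by decomposing $\mathbf{h}_{l',k'}^m$ into its \abb{MMSE} estimate plus an independent zero-mean error. This produces the diagonal matrices $\mathbf{C}_{l,k}^{(3)}$ and $\mathbf{C}_{l,k}^{(4)}$, and hence the denominator of \eqref{eq:GeneralSINR} equals $\mathbf{a}_{l,k}^\conjtr\big(\sum_{i=1}^{4}\mathbf{C}_{l,k}^{(i)}\big)\mathbf{a}_{l,k}$. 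This independence-and-vanishing argument is the main obstacle: for \abb{ZF}/\abb{RZF} the decoder $\mathbf{v}_{m,k}$ is built from channel estimates that share the pilot $\pmb{\phi}_{k'}$ with the interfering user, so verifying that the cross-\abb{BS} factorization still forces $\Exp\{\mathbf{v}_{m,k}^\conjtr \mathbf{h}_{l',k'}^m\}=0$ for $k'\neq k$ needs care.

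It remains to maximize the quotient $\mathrm{SINR}_{l,k}=\dfrac{p_{l,k}\,\mathbf{a}_{l,k}^\conjtr \mathbf{b}_{l,k}\mathbf{b}_{l,k}^\conjtr \mathbf{a}_{l,k}}{\mathbf{a}_{l,k}^\conjtr \mathbf{B}_{l,k}\mathbf{a}_{l,k}}$, where $\mathbf{B}_{l,k}\defas\sum_{i=1}^{4}\mathbf{C}_{l,k}^{(i)}$ is positive definite (its invertibility is guaranteed by the strictly positive diagonal noise term $\mathbf{C}_{l,k}^{(4)}$). Substituting $\mathbf{c}=\mathbf{B}_{l,k}^{1/2}\mathbf{a}_{l,k}$ and applying the Cauchy--Schwarz inequality, equivalently solving the rank-one generalized eigenvalue problem, yields the maximal value $p_{l,k}\,\mathbf{b}_{l,k}^\conjtr \mathbf{B}_{l,k}^{-1}\mathbf{b}_{l,k}$, attained when $\mathbf{a}_{l,k}\propto\mathbf{B}_{l,k}^{-1}\mathbf{b}_{l,k}$. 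Since the quotient is scale-invariant, the representative choice \eqref{eq:LSFDVector} is valid, and inserting the maximal \abb{SINR} into \eqref{eq:GeneralSE} gives \eqref{eq:OptimalRate}. The Rayleigh-quotient step is entirely standard, so the substantive work lies in the quadratic-form rewriting and the diagonalization of $\mathbf{C}_{l,k}^{(3)},\mathbf{C}_{l,k}^{(4)}$.
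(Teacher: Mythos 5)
Your route coincides with the paper's (sketched) proof: rewrite each expectation in \eqref{eq:GeneralSINR} as a Hermitian quadratic form in $\mathbf{a}_{l,k}$, recognize the SINR as a generalized Rayleigh quotient with rank-one numerator $p_{l,k}\mathbf{b}_{l,k}\mathbf{b}_{l,k}^\conjtr$ and positive-definite denominator matrix $\sum_{i=1}^4\mathbf{C}_{l,k}^{(i)}$, and maximize it. Your treatment of $\mathtt{DS}_{l,k}$, $\mathtt{PC}_{l,k}$, $\mathtt{BU}_{l,k}$, the noise term, the invertibility argument via $\mathbf{C}_{l,k}^{(4)}$, and the final Cauchy--Schwarz step are all correct.

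The genuine gap is the step you flag but do not close: establishing $\Exp\{\mathbf{v}_{m,k}^\conjtr\mathbf{h}_{l',k'}^{m}\}=0$ for $k'\neq k$, which is exactly what makes $\mathbf{C}_{l,k}^{(3)}$ diagonal. The tool you propose---decompose $\mathbf{h}_{l',k'}^{m}$ into its MMSE estimate plus an independent zero-mean error---suffices only for MRC, where $\mathbf{v}_{m,k}=\hat{\mathbf{h}}_{m,k}^{m}$ is independent of everything carrying pilot $k'$. For ZF/RZF it fails as stated: $\mathbf{v}_{m,k}$ is built from all estimates at BS~$m$, including $\hat{\mathbf{h}}_{m,k'}^{m}$, and $\hat{\mathbf{h}}_{l',k'}^{m}$ is a \emph{deterministic} linear transform of that same quantity (both are linear functions of the pilot observation $\mathbf{Y}_m\pmb{\phi}_{k'}$; with invertible correlation matrices, $\hat{\mathbf{h}}_{l',k'}^{m}\propto\mathbf{R}_{l',k'}^{m}(\mathbf{R}_{m,k'}^{m})^{-1}\hat{\mathbf{h}}_{m,k'}^{m}$), so no independence is available. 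Nor does ZF orthogonality rescue it: it gives $\mathbf{v}_{m,k}^\conjtr\hat{\mathbf{h}}_{m,k'}^{m}=0$, but the sandwiched matrix $\mathbf{R}_{l',k'}^{m}(\mathbf{R}_{m,k'}^{m})^{-1}$ destroys this orthogonality under spatial correlation. The missing idea is a symmetry/equivariance argument: the pilot observations $\mathbf{Y}_m\pmb{\phi}_{j}$ for distinct pilot indices $j$ are independent, zero-mean, circularly symmetric Gaussian vectors, and MRC, ZF, and RZF all satisfy $\mathbf{V}(\hat{\mathbf{H}}\mathbf{D})=\mathbf{V}(\hat{\mathbf{H}})\mathbf{D}$ for any unit-modulus diagonal $\mathbf{D}$. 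Rotating the phase of $\mathbf{Y}_m\pmb{\phi}_{k'}$ by $e^{i\theta}$ therefore rotates $\hat{\mathbf{h}}_{l',k'}^{m}$ by $e^{i\theta}$ while leaving $\mathbf{v}_{m,k}$ unchanged (since $k\neq k'$), and since the joint distribution is invariant under this rotation, $\Exp\{\mathbf{v}_{m,k}^\conjtr\hat{\mathbf{h}}_{l',k'}^{m}\}=e^{i\theta}\Exp\{\mathbf{v}_{m,k}^\conjtr\hat{\mathbf{h}}_{l',k'}^{m}\}$ for all $\theta$, forcing it to zero. This is precisely why the theorem is restricted to MRC/ZF/RZF: for an arbitrary first-layer decoder the off-diagonal terms need not vanish, which is why the paper's footnote replaces \eqref{eq:C3} by the full, generally non-diagonal covariance $\Exp\{\mathbf{z}_{k,l',k'}\mathbf{z}_{k,l',k'}^\conjtr\}$.
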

		\begin{IEEEproof}
			The proof relies on rewriting the SE as a generalized Rayleigh quotient and solving it. The details are available in the journal version of this paper \cite{Chien2018large}.
		\end{IEEEproof}
		We stress that the \abb{LSFD} vector in \eqref{eq:LSFDVector} is designed to maximize the \abb{SE} in \eqref{eq:OptimalRate} for every user in the network for a given data and pilot power and a given first-layer decoding method. 
		This is a non-trivial generalization of the previous works \cite{nayebi2016,adhikary2017a,Adhikary2018a}, which only considered specific first-layer decoding methods that could provide closed-form expressions.\footnote{We stress that Theorem~\ref{Theorem1v1} also holds in other cases, if we replace $\mathbf{C}_{l,k}^{(3)}$ as $\mathbf{C}_{l,k}^{(3)} = \sum_{l'=1}^L \sum_{k'=1, k' \neq k}^K p_{l',k'} \Exp  \{ \mathbf{z}_{k,l',k'} \mathbf{z}_{k,l',k'}^H \}$, where $\mathbf{z}_{k,l',k'} = [ \mathbf{v}_{1,k}^H \mathbf{h}_{l',k'}^1, \ldots, \mathbf{v}_{L,k}^H \mathbf{h}_{l',k'}^L ] \in \mathbb{C}^L$.} The following theorem states a closed-form expression of the \abb{SE} for the case of \abb{MRC} in arbitrary spatial correlation, which makes the results more practical than in \cite{Adhikary2018a}. 
		\begin{theorem} \label{Theorem2}
			When \abb{MRC} is used, the \abb{SE} in \eqref{eq:GeneralSE} of user~$k$ in cell~$l$ is given by
			\begin{equation} \label{eq: ClosedForm_Rate_MMSE}
			R_{l,k} = \left( 1- \frac{K}{\tau_c} \right) \log_2 \left(1 + \mathrm{SINR}_{l,k} \right),
			\end{equation}
			where the SINR value is given in \eqref{eq:MMSE_SINR} on the top of the next page.
			\begin{figure*}
				\begin{equation} \label{eq:MMSE_SINR}
				\mathrm{SINR}_{l,k} = \frac{  p_{l,k} \left| \sum_{l'=1}^L (a_{l,k}^{l'})^{\ast} b_{l,k}^{l'} \right|^2  }{  \sum_{l'=1, l' \neq l }^L p_{l',k} \left| \sum_{l''=1}^L (a_{l,k}^{l''})^{\ast} b_{l',k}^{l''} \right|^2 + \sum_{l'=1}^L \sum_{k'=1}^K  \sum_{l''=1}^L  p_{l',k'} |a_{l,k}^{l''}|^2 c_{l'',k}^{l',k'}  + \sum_{l'=1}^L |a_{l,k}^{l'}|^2 d_{l',k} }
				\end{equation}
				\vspace*{-0.6cm}
				\hrulefill
			\end{figure*}
			The values $b_{l',k}^{l''}, c_{l'',k}^{l',k'},$ and $d_{l',k}$ are given as
			\begin{align}
			b_{l',k}^{l''} &= \sqrt{K \hat{p}_{l',k} \hat{p}_{l'',k}} \mathrm{tr} \left( \pmb{\Psi}_{l'',k}^{-1} \mathbf{R}_{l'',k}^{l''} \mathbf{R}_{l',k}^{l''} \right),\label{eq:8926616}\\
			c_{l'',k}^{l',k'} &= \hat{p}_{l'',k} \mathrm{tr} \left( \mathbf{R}_{l'',k}^{l''} \pmb{\Psi}_{l'',k}^{-1} \mathbf{R}_{l'',k}^{l''} \mathbf{R}_{l',k'}^{l''} \right),\\
			d_{l',k} &= \sigma^2 \hat{p}_{l',k} \mathrm{tr} \left( \pmb{\Psi}_{l',k}^{-1} \mathbf{R}_{l',k}^{l'} \mathbf{R}_{l',k}^{l'} \right),
			\end{align}
			where $\pmb{\Psi}_{l'',k} = K \sum_{l=1}^K \hat{p}_{l,k} \mathbf{R}_{l,k}^{l''} + \sigma^2 \mathbf{I}_M$ and $\pmb{\Psi}_{l',k}$ is defined in the same manner.
		\end{theorem}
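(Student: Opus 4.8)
The plan is to evaluate, one by one, the five second-order quantities $\Exp\{|\mathtt{DS}_{l,k}|^2\}$, $\Exp\{|\mathtt{PC}_{l,k}|^2\}$, $\Exp\{|\mathtt{BU}_{l,k}|^2\}$, $\Exp\{|\mathtt{NI}_{l,k}|^2\}$ and $\Exp\{|\mathtt{AN}_{l,k}|^2\}$ that define $\mathrm{SINR}_{l,k}$ in \eqref{eq:GeneralSINR}, specialized to the \abb{MRC} choice $\mathbf{v}_{l'',k}=\hat{\mathbf{h}}_{l'',k}^{l''}$. I would begin by recalling the three structural properties of the \abb{MMSE} estimate under correlated Rayleigh fading that make everything computable in closed form: (i) each estimate is a linear function of the processed pilot observation at its own \abb{BS}, so that $\hat{\mathbf{h}}_{l',k}^{l''}=\sqrt{\hat p_{l',k}}\,\mathbf{R}_{l',k}^{l''}\pmb{\Psi}_{l'',k}^{-1}\mathbf{y}_{l'',k}$ with $\hat{\mathbf{h}}_{l',k}^{l''}\sim\mathcal{CN}(\mathbf{0},\hat p_{l',k}\mathbf{R}_{l',k}^{l''}\pmb{\Psi}_{l'',k}^{-1}\mathbf{R}_{l',k}^{l''})$ up to the pilot-energy factor $K$; (ii) the estimate $\hat{\mathbf{h}}_{l',k}^{l''}$ is independent of the estimation error $\tilde{\mathbf{h}}_{l',k}^{l''}=\mathbf{h}_{l',k}^{l''}-\hat{\mathbf{h}}_{l',k}^{l''}$; and (iii) estimates formed at different \abb{BS}s, and channels carried on different pilots $k'\neq k$, are mutually independent. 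Because all quantities are jointly circularly-symmetric complex Gaussian, every expectation reduces to traces of products of correlation matrices.

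For the first-order terms I would compute $b_{l',k}^{l''}=\Exp\{\mathbf{v}_{l'',k}^\conjtr\mathbf{h}_{l',k}^{l''}\}$ directly. Writing $\mathbf{h}_{l',k}^{l''}=\hat{\mathbf{h}}_{l',k}^{l''}+\tilde{\mathbf{h}}_{l',k}^{l''}$ and using property (ii), the error contributes nothing, so $b_{l',k}^{l''}=\Exp\{(\hat{\mathbf{h}}_{l'',k}^{l''})^\conjtr\hat{\mathbf{h}}_{l',k}^{l''}\}$; since both estimates are proportional to the same observation $\mathbf{y}_{l'',k}$ through $\pmb{\Psi}_{l'',k}^{-1}$, this evaluates to $\sqrt{K\hat p_{l',k}\hat p_{l'',k}}\,\mathrm{tr}(\pmb{\Psi}_{l'',k}^{-1}\mathbf{R}_{l'',k}^{l''}\mathbf{R}_{l',k}^{l''})$, giving \eqref{eq:8926616}. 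Substituting this into $\Exp\{|\mathtt{DS}_{l,k}|^2\}$ and $\Exp\{|\mathtt{PC}_{l,k}|^2\}$ yields the numerator and the first denominator sum of \eqref{eq:MMSE_SINR}. For the additive-noise term I would use that the data-phase noise $\mathbf{n}_{l'}$ is independent across \abb{BS}s and independent of the pilot phase, so all cross-\abb{BS} terms vanish and $\Exp\{|\mathtt{AN}_{l,k}|^2\}=\sum_{l'}|a_{l,k}^{l'}|^2\sigma^2\Exp\{\|\hat{\mathbf{h}}_{l',k}^{l'}\|^2\}$, where $\Exp\{\|\hat{\mathbf{h}}_{l',k}^{l'}\|^2\}$ is the trace of the estimate covariance, producing $d_{l',k}$.

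The crux is the two interference terms $\mathtt{BU}$ and $\mathtt{NI}$, which I expect to be the main obstacle. Here I would first argue that the cross-\abb{BS} products $(l''\neq l''')$ inside both $\Exp\{|\cdots|^2\}$ vanish: by property (iii) the combiners $\mathbf{v}_{l'',k}$ and $\mathbf{v}_{l''',k}$ are independent and zero-mean, and the per-\abb{BS} factors are independent, so each cross term factors into a product of zero means. This collapses both sums to the diagonal form $\sum_{l''}|a_{l,k}^{l''}|^2(\cdots)$. The surviving diagonal terms require the Gaussian fourth-moment (Isserlis) identity, which for jointly circularly-symmetric Gaussian $\mathbf{v},\mathbf{h}$ gives $\Exp\{|\mathbf{v}^\conjtr\mathbf{h}|^2\}=|\Exp\{\mathbf{v}^\conjtr\mathbf{h}\}|^2+\mathrm{tr}(\Exp\{\mathbf{v}\mathbf{v}^\conjtr\}\Exp\{\mathbf{h}\mathbf{h}^\conjtr\})$. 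Applying this with $\mathbf{v}=\mathbf{v}_{l'',k}$ and $\mathbf{h}=\mathbf{h}_{l',k'}^{l''}$ shows that the trace term equals $c_{l'',k}^{l',k'}$ regardless of whether $k'=k$ or $k'\neq k$. For $k'\neq k$ the mean $\Exp\{\mathbf{v}^\conjtr\mathbf{h}\}$ is zero by property (iii), so $\mathtt{NI}$ contributes exactly $\sum_{l'}\sum_{k'\neq k}\sum_{l''}p_{l',k'}|a_{l,k}^{l''}|^2 c_{l'',k}^{l',k'}$; for $k'=k$ the mean term is precisely $|b_{l',k}^{l''}|^2$, which is exactly the quantity subtracted off in the definition of $\mathtt{BU}_{l,k}$, so the variance that survives in $\mathtt{BU}$ is again $c_{l'',k}^{l',k}$. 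The two terms therefore merge into the single sum $\sum_{l'}\sum_{k'=1}^K\sum_{l''}p_{l',k'}|a_{l,k}^{l''}|^2 c_{l'',k}^{l',k'}$, which is the second denominator sum of \eqref{eq:MMSE_SINR}. Collecting the numerator and the three denominator contributions gives the stated expression, and the careful bookkeeping of the pilot-energy scaling $K$ together with the identification of which vector pairs are correlated is where the main effort lies.
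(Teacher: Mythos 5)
Your proposal is correct and takes essentially the same route as the paper, whose proof (deferred to the journal version) likewise consists of computing the complex Gaussian moments of the \abb{MMSE} estimates term by term: the cross-\abb{BS} products vanish by independence across \abb{BS}s, the Wick identity $\Exp\{|\mathbf{v}^\conjtr\mathbf{h}|^2\}=|\Exp\{\mathbf{v}^\conjtr\mathbf{h}\}|^2+\mathrm{tr}\left(\Exp\{\mathbf{v}\mathbf{v}^\conjtr\}\Exp\{\mathbf{h}\mathbf{h}^\conjtr\}\right)$ produces the $c_{l'',k}^{l',k'}$ coefficients, and the $\mathtt{BU}$ and $\mathtt{NI}$ contributions merge into the single sum over all $k'$ exactly as you argue. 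The only blemish is a bookkeeping slip, not a gap: with your own normalization $\hat{\mathbf{h}}_{l',k}^{l''}=\sqrt{\hat p_{l',k}}\,\mathbf{R}_{l',k}^{l''}\pmb{\Psi}_{l'',k}^{-1}\mathbf{y}_{l'',k}$ (with $\mathbf{y}_{l'',k}=\mathbf{Y}_{l''}\pmb{\phi}_{k}$) the calculation actually yields $b_{l',k}^{l''}=K\sqrt{\hat p_{l',k}\hat p_{l'',k}}\,\mathrm{tr}(\pmb{\Psi}_{l'',k}^{-1}\mathbf{R}_{l'',k}^{l''}\mathbf{R}_{l',k}^{l''})$ with $c$ and $d$ likewise inflated by $K$ — i.e., the paper's coefficients multiplied by $(\sqrt{K},K,K)$, which corresponds to a uniform rescaling of the first-layer combiner under which the SINR in \eqref{eq:MMSE_SINR} is invariant — so the stated theorem is recovered unchanged once this scaling is noted.
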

		\begin{proof}
			The proof encompasses of computing the moments of complex Gaussian distributions and the detail is available in the journal version of this paper \cite{Chien2018large}.
		\end{proof}
		
		Theorem~\ref{Theorem2} describes the exact impact that the spatial correlation has on the system performance through the coefficients $b_{l',k}^{l''}, c_{l'',k}^{l',k'},$ and $d_{l',k}$.  
		It is seen that the numerator of \eqref{eq:MMSE_SINR} grows as the square of the number of antennas, $M^2$, since the trace in \eqref{eq:8926616} is the sum of $M$ terms. This gain comes from the coherent combination of the signals from the $M$ antennas.  It can also be seen from Theorem~\ref{Theorem2} that the pilot contamination in \eqref{eq:919838377} combines coherently, i.e., its variance---the first term in the denominator that contains $b^{l''}_{l,k}$---grows as $M^2$.  The other terms in the denominator represent the impact of non-coherent interference and Gaussian noise, respectively.  These two terms only grow as $M$. Since the interference terms contain products of correlation matrices of different users, the interference is smaller between users that have different spatial correlation characteristics \cite{Bjornson2017bo}.
		
		The following corollary gives the optimal \abb{LSFD} vector $\mathbf{a}_{l,k}$ that maximizes the \abb{SE} of every user for a given set of pilot and data powers. 
		\begin{corollary} \label{corollary:Opt_LSFD}
			For a given set of data and pilot powers, by using \abb{MRC} and optimal \abb{LSFD}, the \abb{SE} in Theorem~\ref{Theorem2} is given in closed form as
			\begin{equation} \label{eq:MaxRateMRC}
			R_{l,k} = \left( 1- \frac{K}{\tau_\mathrm{c}} \right) \log_2 \left(1 +  p_{l,k}\mathbf{b}_{l,k}^\conjtr  \mathbf{C}_{l,k}^{-1} \mathbf{b}_{l,k} \right)
			\end{equation}
			where $\mathbf{C}_{l,k}\in \mathbb{C}^{L \times L}$ and $\mathbf{b}_{l,k} \in \mathbb{C}^{L}$ are defined as
			\begin{align}
			\mathbf{C}_{l,k} &\defas \sum_{ \substack{l'=1\\l' \neq l}}^L  p_{l',k} \mathbf{b}_{l',k} \mathbf{b}_{l',k}^\conjtr + \mathrm{diag} \left( \sum_{l'=1}^L \sum_{k'=1}^K p_{l',k'} c_{1,k}^{l',k'} + d_{1,k} \right.  \notag \\ & \left. ,\ldots, \sum_{l'=1}^L \sum_{k'=1}^K p_{l',k'} c_{L,k}^{l',k'} + d_{L,k} \right), \label{eq:Clk}\\
			\mathbf{b}_{l',k} &\defas [b_{l',k}^1, \ldots, b_{l',k}^L]^\tr. \label{eq:bik}
			\end{align}
			The \abb{SE} in \eqref{eq:MaxRateMRC} is obtained by using \abb{LSFD} vector 
			\begin{equation}
			\mathbf{a}_{l,k} = \mathbf{C}_{l,k}^{-1} \mathbf{b}_{l,k}.
			\end{equation}
		\end{corollary}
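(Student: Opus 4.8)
The plan is to use the monotonicity of the logarithm to turn the optimization of $R_{l,k}$ over the \abb{LSFD} vector $\mathbf{a}_{l,k}$ into a maximization of $\mathrm{SINR}_{l,k}$ in \eqref{eq:MMSE_SINR}, and then to recast that \abb{SINR} as a generalized Rayleigh quotient in $\mathbf{a}_{l,k}$. Since $(1 - K/\tau_\mathrm{c})\log_2(1 + x)$ is strictly increasing in $x$, the \abb{LSFD} vector maximizing the rate coincides with the one maximizing the \abb{SINR}, so it suffices to work with \eqref{eq:MMSE_SINR}. Collecting the weights into $\mathbf{a}_{l,k} = [a_{l,k}^1,\ldots,a_{l,k}^L]^\tr$, I would rewrite each of the four sums in \eqref{eq:MMSE_SINR} as a Hermitian quadratic form in $\mathbf{a}_{l,k}$.

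Concretely, the numerator becomes $p_{l,k}\,|\mathbf{a}_{l,k}^\conjtr \mathbf{b}_{l,k}|^2 = p_{l,k}\,\mathbf{a}_{l,k}^\conjtr \mathbf{b}_{l,k}\mathbf{b}_{l,k}^\conjtr \mathbf{a}_{l,k}$ with $\mathbf{b}_{l,k}$ as in \eqref{eq:bik}. In the denominator, the pilot-contamination sum gives $\mathbf{a}_{l,k}^\conjtr \bigl(\sum_{l'\neq l}p_{l',k}\mathbf{b}_{l',k}\mathbf{b}_{l',k}^\conjtr\bigr)\mathbf{a}_{l,k}$, whereas the non-coherent-interference and noise sums depend on the weights only through $|a_{l,k}^{l''}|^2$. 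Swapping the order of summation to bring $l''$ outermost, these reorganize into $\mathbf{a}_{l,k}^\conjtr \mathbf{D}\mathbf{a}_{l,k}$, where $\mathbf{D}$ is diagonal with $l''$-th entry $\sum_{l'}\sum_{k'}p_{l',k'}c_{l'',k}^{l',k'} + d_{l'',k}$. Matching this against \eqref{eq:Clk} confirms that the whole denominator equals $\mathbf{a}_{l,k}^\conjtr \mathbf{C}_{l,k}\mathbf{a}_{l,k}$, so $\mathrm{SINR}_{l,k} = p_{l,k}\,\mathbf{a}_{l,k}^\conjtr \mathbf{b}_{l,k}\mathbf{b}_{l,k}^\conjtr \mathbf{a}_{l,k}/(\mathbf{a}_{l,k}^\conjtr \mathbf{C}_{l,k}\mathbf{a}_{l,k})$.

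I would then verify that $\mathbf{C}_{l,k}$ is Hermitian positive definite: its rank-one sum $\sum_{l'\neq l}p_{l',k}\mathbf{b}_{l',k}\mathbf{b}_{l',k}^\conjtr$ is positive semidefinite, and the added diagonal is strictly positive since each $d_{l'',k} = \sigma^2\hat{p}_{l'',k}\,\mathrm{tr}(\cdots) > 0$ for $\sigma^2 > 0$. Hence $\mathbf{C}_{l,k}^{-1}$ exists and the expression is a genuine generalized Rayleigh quotient with rank-one numerator matrix $\mathbf{b}_{l,k}\mathbf{b}_{l,k}^\conjtr$. The maximization is then the standard step: writing $\mathbf{u} = \mathbf{C}_{l,k}^{1/2}\mathbf{a}_{l,k}$ and applying Cauchy--Schwarz, $|\mathbf{u}^\conjtr \mathbf{C}_{l,k}^{-1/2}\mathbf{b}_{l,k}|^2 \le \|\mathbf{u}\|^2\,\mathbf{b}_{l,k}^\conjtr \mathbf{C}_{l,k}^{-1}\mathbf{b}_{l,k}$, yields $\max_{\mathbf{a}_{l,k}}\mathrm{SINR}_{l,k} = p_{l,k}\,\mathbf{b}_{l,k}^\conjtr \mathbf{C}_{l,k}^{-1}\mathbf{b}_{l,k}$, with equality exactly when $\mathbf{u}\propto \mathbf{C}_{l,k}^{-1/2}\mathbf{b}_{l,k}$, i.e.\ when $\mathbf{a}_{l,k} = \mathbf{C}_{l,k}^{-1}\mathbf{b}_{l,k}$ up to a scaling that cancels in the quotient. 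Substituting this maximum into \eqref{eq: ClosedForm_Rate_MMSE} gives \eqref{eq:MaxRateMRC}. The only real obstacle is the careful reindexing of the triple-sum interference term to confirm that the three non-pilot-contamination contributions assemble into precisely the diagonal matrix of \eqref{eq:Clk}; once the quotient is in rank-one form, the optimization itself is immediate.
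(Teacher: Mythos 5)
Your proposal is correct and follows essentially the same route as the paper: both recast the \abb{SINR} as a generalized Rayleigh quotient in $\mathbf{a}_{l,k}$ with rank-one numerator $p_{l,k}\mathbf{b}_{l,k}\mathbf{b}_{l,k}^\conjtr$ and denominator matrix $\mathbf{C}_{l,k}$, then invoke the standard maximizer $\mathbf{a}_{l,k}=\mathbf{C}_{l,k}^{-1}\mathbf{b}_{l,k}$ (the paper does this in Theorem~\ref{Theorem1v1} for arbitrary first-layer decoders and obtains the corollary by specializing to the \abb{MRC} coefficients of Theorem~\ref{Theorem2}, whereas you carry out the same Rayleigh-quotient argument directly on \eqref{eq:MMSE_SINR}). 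Your Cauchy--Schwarz step with the whitening $\mathbf{u}=\mathbf{C}_{l,k}^{1/2}\mathbf{a}_{l,k}$, together with the positive-definiteness check via $d_{l'',k}>0$, is exactly the standard solution the paper appeals to.
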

		Although Corollary~\ref{corollary:Opt_LSFD} is a special case of Theorem~\ref{Theorem1v1} when \abb{MRC} is used, the \abb{LSFD} vector $\mathbf{a}_{l,k}$ is  obtained in closed form.
		
		\section{Optimizing the Sum \abb{SE} } \label{Section:SumRateOpt}
		In this section, the sum \abb{SE} maximization problem is formulated where the 
		optimization variables are the data powers and \abb{LSFD} vectors. Since this problem is NP-hard, an iterative algorithm is proposed to  find a stationary point with low computational complexity.
		
		\subsection{Problem Formulation}
		We consider sum \abb{SE} maximization
		\begin{equation} \label{Problem: Sumrate}
		\begin{aligned} 
		 \underset{ \{ p_{l,k} \geq 0 \}, \{ \mathbf{a}_{l,k} \} }{\textrm{maximize}}
		&&&  \sum_{l=1}^L \sum_{k=1}^K R_{l,k} \\
		 \mbox{subject to} \quad
		 &&& p_{l,k} \leq P_{\mathrm{max},l,k} \quad \forall l,k.
		\end{aligned}
		\end{equation}
		Inserting the SE expression \eqref{eq: ClosedForm_Rate_MMSE} into \eqref{Problem: Sumrate}, and removing the constant pre-log factor, yields the equivalent formulation
		\begin{equation} \label{Problem: Sumratev1}
		\begin{aligned}
		 \underset{ \{ p_{l,k} \geq 0 \},  \{ \mathbf{a}_{l,k} \}  }{\textrm{maximize}}
		&&&  \sum_{l=1}^L \sum_{k=1}^K \log_2 \left(1 + \mathrm{SINR}_{l,k} \right) \\
		 \mbox{subject to} \quad
		&&& p_{l,k} \leq P_{\mathrm{max},l,k} \quad \forall l,k.
		\end{aligned}
		\end{equation}
		Sum \abb{SE} maximization with imperfect \abb{CSI} is known to be a non-convex and \abb{NP}-hard problem \cite{Annapureddy2011a} and this applies also to \eqref{Problem: Sumratev1}, even if the optimal LSFD vectors are given in Corollary~\ref{corollary:Opt_LSFD}.
		Therefore, the global optimum is overly difficult to compute. 
		Nevertheless, solving the ergodic sum SE maximization in~\eqref{Problem: Sumratev1} for a Massive \abb{MIMO} system is more practical than maximizing the instantaneous \abb{SE}s for a given small-scale fading realization, as is normally done in small-scale \abb{MIMO} systems \cite{Christensen2008a}. 
		Since the sum \abb{SE} maximization in \eqref{Problem: Sumratev1} only depends on the large-scale fading coefficients, the solution can be used for as much time  as the channel statistics are constant.
		Another key difference from prior work is that we jointly optimize the data powers and \abb{LSFD} vectors.
		
		Instead of seeking the global optimum to~\eqref{Problem: Sumratev1}, we will obtain a stationary point to \eqref{Problem: Sumratev1} by following the weighted \abb{MMSE} approach from \cite{Christensen2008a} and adapt it to the problem at hand. 
		To this end, we first formulate the weighted \abb{MMSE} problem that is equivalent to \eqref{Problem: Sumratev1}.
		\begin{theorem} \label{Theorem:MMSEOptProblem}
			The optimization problem
			\begin{equation} \label{Problem: Sumratev2}
			\begin{aligned}
			 \underset{ \substack{\{ p_{l,k} \geq 0 \}, \{ \mathbf{a}_{l,k} \},\\ \{ w_{l,k} \geq 0 \}, \{ u_{l,k} \} }}{\mathrm{minimize}}
			&&&  \sum_{l=1}^L \sum_{k=1}^K w_{l,k} e_{l,k} - \ln (w_{l,k}) \\
			 \mathrm{subject \, \, to} \quad
			&&& p_{l,k} \leq P_{\mathrm{max},l,k} \;, \forall l,k,\\
			\end{aligned}
			\end{equation}
			where $e_{l,k}$ is defined as
			\begin{equation} \label{eq: errorOpt}
			\begin{split}
			e_{l,k} &\defas  |u_{l,k}|^2 \left( \sum\limits_{l'=1 }^L p_{l',k} \left| \sum\limits_{l''=1}^L  (a_{l,k}^{l''})^{\ast} b_{l',k}^{l''} \right|^2  +  \sum\limits_{l'=1}^L \sum\limits_{k'=1}^K \sum\limits_{l''=1}^L  \right. \\ &\left. p_{l',k'}| a_{l,k}^{l''} |^2 c_{l'',k}^{l',k'} +  \sum\limits_{l'=1}^L | a_{l,k}^{l'}|^2 d_{l',k}  \right) - 2\sqrt{p_{l,k}}  \times\\
			&\mathfrak{Re}\left(u_{l,k} \left( \sum_{l'=1}^L (a_{l,k}^{l'})^{\ast} b_{l,k}^{l'} \right)\right)   +1,
			\end{split}
			\end{equation}
			is equivalent to the sum \abb{SE} optimization problem \eqref{Problem: Sumratev1} in the sense that \eqref{Problem: Sumratev1} and \eqref{Problem: Sumratev2} have the same optimal transmit powers $\{p_{l,k}\}, \forall l,k,$ and \abb{LSFD} vectors $ \{ \mathbf{a}^{l'}_{l,k} \}, \forall l,k,l'$. 
		\end{theorem}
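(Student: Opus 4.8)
The plan is to establish the equivalence by the block-coordinate argument underlying the weighted \abb{MMSE} method: for fixed powers $\{p_{l,k}\}$ and \abb{LSFD} vectors $\{\mathbf{a}_{l,k}\}$, I would minimize the objective of \eqref{Problem: Sumratev2} analytically over the auxiliary variables $\{u_{l,k}\}$ and $\{w_{l,k}\}$, show that the resulting reduced objective is an affine, strictly decreasing function of the sum \abb{SE}, and conclude that the two problems share the same minimizing $\{p_{l,k}\}$ and $\{\mathbf{a}_{l,k}\}$. The quantity $e_{l,k}$ in \eqref{eq: errorOpt} should first be recognized as the mean-squared error $\mathbb{E}\{|s_{l,k}-u_{l,k}^{\ast}\hat{s}_{l,k}|^2\}$ between the transmitted symbol and a scaled version of the final estimate $\hat{s}_{l,k}$ in \eqref{eq:919838377}, where the bracketed factor multiplying $|u_{l,k}|^2$ is exactly the total power $\mathbb{E}\{|\hat{s}_{l,k}|^2\}$, the cross term is $-2\,\mathfrak{Re}(u_{l,k}\mathbb{E}\{s_{l,k}\hat{s}_{l,k}^{\ast}\})$, and the trailing $+1$ is $\mathbb{E}\{|s_{l,k}|^2\}=1$.

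First I would minimize $e_{l,k}$ over the complex scalar $u_{l,k}$. Since $e_{l,k}$ is a convex quadratic in $u_{l,k}$, setting its gradient to zero gives the \abb{MMSE} receiver in closed form; denoting $\beta_{l,k}\defas\sum_{l'}(a_{l,k}^{l'})^{\ast}b_{l,k}^{l'}$ and the bracketed total power by $P_{l,k}$, the minimizer is $u_{l,k}^{\mathrm{opt}}=\sqrt{p_{l,k}}\,\beta_{l,k}^{\ast}/P_{l,k}$. The crucial algebraic step is substituting this back to obtain the minimum mean-squared error. Here one must observe that the inner sum over $l'$ in \eqref{eq: errorOpt} runs over all cells, so its $l'=l$ term reproduces precisely the signal power $p_{l,k}|\beta_{l,k}|^2$ appearing in the numerator of \eqref{eq:MMSE_SINR}, while the remaining $l'\neq l$ terms together with the $c$- and $d$-terms reproduce the denominator of \eqref{eq:MMSE_SINR}. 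Writing the numerator as $S_{l,k}$ and the denominator as $I_{l,k}$ so that $P_{l,k}=S_{l,k}+I_{l,k}$ and $\mathrm{SINR}_{l,k}=S_{l,k}/I_{l,k}$, the minimized error collapses to $e_{l,k}^{\mathrm{opt}}=1-S_{l,k}/P_{l,k}=1/(1+\mathrm{SINR}_{l,k})$.

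Next I would minimize the per-user objective $w_{l,k}e_{l,k}^{\mathrm{opt}}-\ln(w_{l,k})$ over $w_{l,k}>0$. This scalar problem is strictly convex in $w_{l,k}$, and its stationarity condition $e_{l,k}^{\mathrm{opt}}-1/w_{l,k}=0$ yields $w_{l,k}^{\mathrm{opt}}=1/e_{l,k}^{\mathrm{opt}}=1+\mathrm{SINR}_{l,k}$, which is positive as required since $e_{l,k}^{\mathrm{opt}}\in(0,1]$. Substituting $w_{l,k}^{\mathrm{opt}}$ and $e_{l,k}^{\mathrm{opt}}$ back gives a per-user value $1-\ln(1+\mathrm{SINR}_{l,k})$, so the fully reduced objective of \eqref{Problem: Sumratev2} becomes $LK-\sum_{l,k}\ln(1+\mathrm{SINR}_{l,k})$.

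Finally I would assemble the equivalence. Because the inner minimizations over $\{u_{l,k}\}$ and $\{w_{l,k}\}$ are attained for every feasible $(\{p_{l,k}\},\{\mathbf{a}_{l,k}\})$ and depend only on those variables, minimizing \eqref{Problem: Sumratev2} jointly is the same as minimizing $LK-\sum_{l,k}\ln(1+\mathrm{SINR}_{l,k})$ over $(\{p_{l,k}\},\{\mathbf{a}_{l,k}\})$ subject to the identical power constraints. Since $LK$ is constant and $\log_2(\cdot)=\ln(\cdot)/\ln 2$ differs only by a positive factor, this is exactly a maximization of the sum \abb{SE} objective in \eqref{Problem: Sumratev1}; hence the two problems attain their optima at the same $\{p_{l,k}\}$ and $\{\mathbf{a}_{l,k}\}$, with the auxiliary variables pinned to $u_{l,k}^{\mathrm{opt}}$ and $w_{l,k}^{\mathrm{opt}}$. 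I expect the only genuinely delicate step to be the bookkeeping in the second paragraph, namely correctly matching the all-cell sum in \eqref{eq: errorOpt} to the signal-plus-interference split of \eqref{eq:MMSE_SINR} so that the identity $e_{l,k}^{\mathrm{opt}}=1/(1+\mathrm{SINR}_{l,k})$ emerges cleanly; the remaining steps are routine scalar convex optimizations.
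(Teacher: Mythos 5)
Your proposal is correct and follows essentially the same route as the paper: the paper's (sketched) proof invokes the weighted \abb{MMSE} equivalence of \cite{Christensen2008a} by interpreting $e_{l,k}$ as the mean-squared error of detecting $s_{l,k}$ in a virtual SISO system whose \abb{SE} equals \eqref{eq:MaxRateMRC}, which is exactly your identification of $e_{l,k}$ as $\Exp\{|s_{l,k}-u_{l,k}^{\ast}\hat{s}_{l,k}|^2\}$ followed by closed-form elimination of $u_{l,k}$ and $w_{l,k}$ to recover $1-\ln(1+\mathrm{SINR}_{l,k})$. The bookkeeping you flag as delicate (the full sum over $l'$ in \eqref{eq: errorOpt} splitting into the signal term $l'=l$ plus the interference terms of \eqref{eq:MMSE_SINR}) is indeed the key identity, and you have it right.
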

		\begin{proof}
			The proof is based on the signal detection process of a SISO system having the same SE as \eqref{eq:MaxRateMRC}. The detail proof is available in our journal version \cite{Chien2018large}.
		\end{proof}
		
		\setcounter{eqnback}{\value{equation}} \setcounter{equation}{37}
		\begin{figure*}
			\begin{equation} \label{eq:tildeu_lk_n1}
			\tilde{u}_{l,k}^{(n-1)} = \sum\limits_{l'=1}^L (\rho_{l',k}^{(n-1)})^2 \left| \sum\limits_{l''=1}^L (a_{l,k}^{l'', (n-1)})^{\ast} b_{l',k}^{l''} \right|^2  + \sum\limits_{l'=1}^L \sum\limits_{k'=1}^K \sum\limits_{l''=1}^L (\rho_{l',k'}^{(n-1)})^2 | a_{l,k}^{l'',(n-1)}|^2 c_{l'',k}^{l',k'}
			+  \sum\limits_{l'=1}^L | a_{l,k}^{l',(n-1)}|^2 d_{l',k}
			\end{equation} \vspace*{-0.65cm}
		\end{figure*}
		\setcounter{eqncnt}{\value{equation}}
		\setcounter{equation}{\value{eqnback}}
		
		\setcounter{eqnback}{\value{equation}} \setcounter{equation}{41}
		\begin{figure*}
			\begin{equation} \label{eq:C_lk_n}
			\widetilde{\mathbf{C}}_{l,k}^{(n-1)} = \sum_{l'=1}^L  (\rho_{l',k}^{(n-1)})^2 \mathbf{b}_{l',k} \mathbf{b}_{l',k}^\conjtr + \mathrm{diag} \left( \sum_{l'=1}^L \sum_{k'=1}^K (\rho_{l',k'}^{(n-1)})^2 c_{1,k}^{l',k'} + d_{1,k}, \ldots, \sum_{l'=1}^L \sum_{k'=1}^K (\rho_{l',k'}^{(n-1)})^2 c_{L,k}^{l',k'} + d_{L,k} \right)
			\end{equation} 
		    \vspace*{-0.25cm}
			\begin{equation} \label{eq:rho_lkSol}
			\rho_{l,k}^{(n)} = \min \left(\frac{w_{l,k}^{(n)} \mathfrak{Re}  \left( u_{l,k}^{(n)}   \sum_{l'=1}^L (a_{l,k}^{l',(n)})^{\ast} b_{l,k}^{l'} \right) }{\sum_{l' =1}^L w_{l',k}^{(n)} |u_{l',k}^{(n)}|^2 \left| \sum_{l''=1}^L (a_{l',k}^{l'',(n)})^{\ast} b_{l,k}^{l''} \right|^2 +  \sum_{l'=1}^L \sum_{k'=1}^K w_{l',k'}^{(n)} |u_{l',k'}^{(n)}|^2 \sum_{l''=1}^L | a_{l',k'}^{l'',(n)}|^2 c_{l'',k'}^{l,k}}, \sqrt{P_{\max,l,k}} \right)
			\end{equation} \vspace*{-0.4cm}
			\hrulefill
			\vspace*{-0.2cm}
		\end{figure*}
		\setcounter{eqncnt}{\value{equation}}
		\setcounter{equation}{\value{eqnback}}
		\vspace*{-0.2cm}
		\subsection{Iterative Algorithm}
		We will obtain a stationary point to \eqref{Problem: Sumratev2} by decomposing it into a sequence of subproblems, each having a closed-form solution.
		To this end, the power variable $p_{l,k}$ is substituted with $\rho_{l,k} = \sqrt{p_{l,k}}$. By alternating between solving the subproblems we obtain the following result. 
		\begin{theorem} \label{Theorem: IterativeSol}
		 A stationary point to \eqref{Problem: Sumratev2} is obtained by iteratively updating $\{ \mathbf{a}_{l,k}, u_{l,k}, w_{l,k}, \rho_{l,k} \}$. Let $ \mathbf{a}_{l,k}^{n-1}, u_{l,k}^{n-1}, w_{l,k}^{n-1}, \rho_{l,k}^{n-1}$ the values after iteration $n-1$. At iteration $n$, the optimization parameters are updated in the following way:
		 \begin{itemize}[leftmargin=*]
			\item $u_{l,k}$ is updated as
				\begin{equation} \label{eq:u_lk_n1}
				u_{l,k}^{(n)} = \left(\rho_{l,k}^{(n-1)}  \sum\limits_{l'=1}^L a_{l,k}^{l',(n-1)} (b_{l,k}^{l'})^{\ast} \right)/ \tilde{u}_{l,k}^{(n-1)},
				\end{equation}
				where the value $\tilde{u}_{l,k}^{(n-1)}$ is defined in \eqref{eq:tildeu_lk_n1} on the top of this page.
				\item $w_{l,k}$ is updated as
				\setcounter{eqnback}{\value{equation}} \setcounter{equation}{38}
				\begin{equation} \label{eq:w_lk_n1}
				w_{l,k}^{(n)} = 1/e_{l,k}^{(n)},
				\end{equation}
				where $e_{l,k}^{(n)}$ is given by
				\begin{multline} \label{eq:e_lk_n1}
				e_{l,k}^{(n)} = |u_{l,k}^{(n)}|^2 \tilde{u}_{l,k}^{(n-1)} - 2\rho_{l,k}^{(n-1)} \times \\
				\mathfrak{Re} \left(u_{l,k}^{(n)}   \left( \sum_{l'=1}^L (a_{l,k}^{l',(n-1)})^{\ast} b_{l,k}^{l'} \right) \right) + 1.
				\end{multline}
				\item $\mathbf{a}_{l,k}$ is updated as
				\begin{equation} \label{eq: a_lk_n1}
				\mathbf{a}_{l,k}^{(n)} = \tilde{u}_{l,k}^{\ast, (n)} \left(\widetilde{\mathbf{C}}_{l,k}^{(n-1)}\right)^{-1} \mathbf{b}_{l,k} / \sum\limits_{l'=1}^L (a_{l,k}^{l',(n-1)})^{\ast} b_{l,k}^{l'},
				\end{equation}
				where $\widetilde{\mathbf{C}}_{l,k}^{(n-1)}$ is computed as in \eqref{eq:C_lk_n} on the top of this page.
				\item $\rho_{l,k}$ is updated as in \eqref{eq:rho_lkSol} on the top of this page.
				
			\end{itemize}
			If we denote the stationary point to \eqref{Problem: Sumratev2} that is attained by the above iterative algorithm as $n\to \infty$ by $u_{l,k}^{\mathrm{opt}}$, $w_{l,k}^{\mathrm{opt}}$, $\mathbf{a}_{l,k}^{\mathrm{opt}}$, and $(\rho_{l,k}^{\mathrm{opt}})^2$, for all $l,k$, 
			then the solution $\{\mathbf{a}_{l,k}^{\mathrm{opt}}\}, \{(\rho_{l,k}^{\mathrm{opt}})^2\},$ is also a stationary point to the problem \eqref{Problem: Sumratev1}.
		\end{theorem}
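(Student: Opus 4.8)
The plan is to recognize the iterative scheme as a block coordinate descent (BCD) applied to the weighted \abb{MMSE} problem \eqref{Problem: Sumratev2}, and then to transfer the resulting stationarity back to \eqref{Problem: Sumratev1} through the equivalence already established in Theorem~\ref{Theorem:MMSEOptProblem}. First I would verify that each of the four displayed updates is the exact global minimizer of the objective $\sum_{l,k} w_{l,k} e_{l,k} - \ln(w_{l,k})$ over one block of variables with the other three blocks held fixed, and that each block decouples into $LK$ independent per-user subproblems (the objective contains no cross terms between distinct $(l,k)$ in any single block). For the block $\{u_{l,k}\}$, $e_{l,k}$ in \eqref{eq: errorOpt} is an unconstrained strictly convex quadratic in $u_{l,k}$, since the coefficient of $|u_{l,k}|^2$ is the strictly positive quantity $\tilde{u}_{l,k}^{(n-1)}$ in \eqref{eq:tildeu_lk_n1}; setting the Wirtinger derivative to zero gives \eqref{eq:u_lk_n1}. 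For $\{w_{l,k}\}$, the scalar map $w \mapsto w\, e_{l,k} - \ln w$ is strictly convex on $w > 0$ with stationarity $e_{l,k} - 1/w = 0$, yielding \eqref{eq:w_lk_n1}. For $\{\mathbf{a}_{l,k}\}$, $e_{l,k}$ is a strictly convex quadratic form whose Hessian is proportional to $\widetilde{\mathbf{C}}_{l,k}^{(n-1)}$ in \eqref{eq:C_lk_n}, so the normal equations give the Wiener-type solution \eqref{eq: a_lk_n1}. Finally, for $\{\rho_{l,k}\}$ the objective is a strictly convex quadratic in $\rho_{l,k}$ on $[0,\sqrt{P_{\mathrm{max},l,k}}]$, so the constrained minimizer is the unconstrained stationary point clipped to the upper bound, which is exactly the projection written in \eqref{eq:rho_lkSol}.

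Next I would establish convergence. Since each update is an exact block minimization, the objective of \eqref{Problem: Sumratev2} is non-increasing along the iterations. It is also bounded below: after the $w$-update with the \abb{MMSE} receiver $u_{l,k}^{(n)}$, the minimum mean square error equals $e_{l,k}^{(n)} = 1/(1+\mathrm{SINR}_{l,k})$, and because the noise terms $d_{l',k}$ keep the denominator of $\mathrm{SINR}_{l,k}$ in \eqref{eq:MMSE_SINR} strictly positive, $\mathrm{SINR}_{l,k}$ stays finite and $e_{l,k}^{(n)}$ stays bounded away from zero, so $\ln e_{l,k}^{(n)}$ is bounded below. Hence the sequence of objective values converges. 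Because the feasible set is the Cartesian product of the unconstrained spaces for $u,w,\mathbf{a}$ with the compact boxes for $\rho$, and each block subproblem admits a unique minimizer, I would invoke the standard convergence guarantee for BCD on a continuously differentiable objective over a block-separable feasible set (the Gauss--Seidel/Tseng-type result used in \cite{Christensen2008a}) to conclude that every limit point of $\{\mathbf{a}_{l,k}^{(n)}, u_{l,k}^{(n)}, w_{l,k}^{(n)}, \rho_{l,k}^{(n)}\}$ is a coordinatewise minimum, hence a stationary point of \eqref{Problem: Sumratev2}.

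The final and most delicate step is to transfer stationarity from \eqref{Problem: Sumratev2} to \eqref{Problem: Sumratev1}. At a limit point the auxiliary variables satisfy their own optimality, $u_{l,k} = u_{l,k}^{\mathrm{opt}}$ and $w_{l,k} = 1/e_{l,k}$; substituting these back, as in the equivalence proof of Theorem~\ref{Theorem:MMSEOptProblem}, collapses the weighted \abb{MMSE} objective to $LK - \sum_{l,k}\ln(1+\mathrm{SINR}_{l,k})$. I would then use that $\partial/\partial u_{l,k} = 0$ and $\partial/\partial w_{l,k} = 0$ hold at the limit point, so by an envelope (partial-minimization) argument the chain-rule contributions through $u$ and $w$ vanish and the gradients of the reduced objective with respect to $\mathbf{a}_{l,k}$ and $\rho_{l,k}$ coincide, up to the positive scaling $w_{l,k}$, with the gradients of $-\sum_{l,k}\ln(1+\mathrm{SINR}_{l,k})$. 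Consequently the first-order (KKT) conditions of \eqref{Problem: Sumratev2} in the remaining variables reduce exactly to those of \eqref{Problem: Sumratev1}, so $\{\mathbf{a}_{l,k}^{\mathrm{opt}}\}, \{(\rho_{l,k}^{\mathrm{opt}})^2\}$ is a stationary point of the original sum \abb{SE} problem. I expect this last equivalence of stationary points to be the main obstacle, since Theorem~\ref{Theorem:MMSEOptProblem} only matches global optima, whereas here the particular algebraic structure of $e_{l,k}$ in \eqref{eq: errorOpt} and the definition of $\mathrm{SINR}_{l,k}$ in \eqref{eq:MMSE_SINR} must be matched term by term to confirm that eliminating the auxiliary variables introduces no spurious first-order terms.
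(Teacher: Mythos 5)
Your proposal is correct and follows essentially the same route as the paper's proof: the paper also obtains each update in \eqref{eq:u_lk_n1}--\eqref{eq:rho_lkSol} by setting the first derivative of the (per-block) Lagrangian to zero — which is exactly your exact block-coordinate-minimization verification, including the projection onto $[0,\sqrt{P_{\mathrm{max},l,k}}]$ for $\rho_{l,k}$ — and it likewise transfers stationarity from \eqref{Problem: Sumratev2} to \eqref{Problem: Sumratev1} via the chain rule, which is your envelope/partial-minimization step eliminating $u_{l,k}$ and $w_{l,k}$. The only difference is one of detail, not of approach: you additionally spell out the monotone-decrease/bounded-below BCD convergence argument that the conference version leaves to the journal paper.
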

		\begin{proof}
			The closed-form expression to each optimization variable is obtained by taking the first derivative of the Lagrangian function and equating it to zero, while the same stationary point of problems~\eqref{Problem: Sumratev1} and \eqref{Problem: Sumratev2} is based on the chain rule. The detail proof is available in the journal version \cite{Chien2018large}.
		\end{proof}
		If the initial data power values are uniformly distributed over the range $[0, \sqrt{P_{\max,l,k}}]$, the initial \abb{LSFD} vectors can be computed using Corollary~\ref{corollary:Opt_LSFD}. The iterative algorithm in Theorem~\ref{Theorem: IterativeSol} is then used to obtain a stationary point to problem~\eqref{Problem: Sumrate}. This algorithm is terminated when the variation between two consecutive iterations is sufficiently small.
		
		\section{Numerical Results} \label{Section:NumericalResults}
		
			\begin{figure*}[t]
			\begin{minipage}{0.32\textwidth}
				\centering
				\includegraphics[trim=0.25cm 0.0cm 1.0cm 0.5cm, clip=true, width=2.3in]{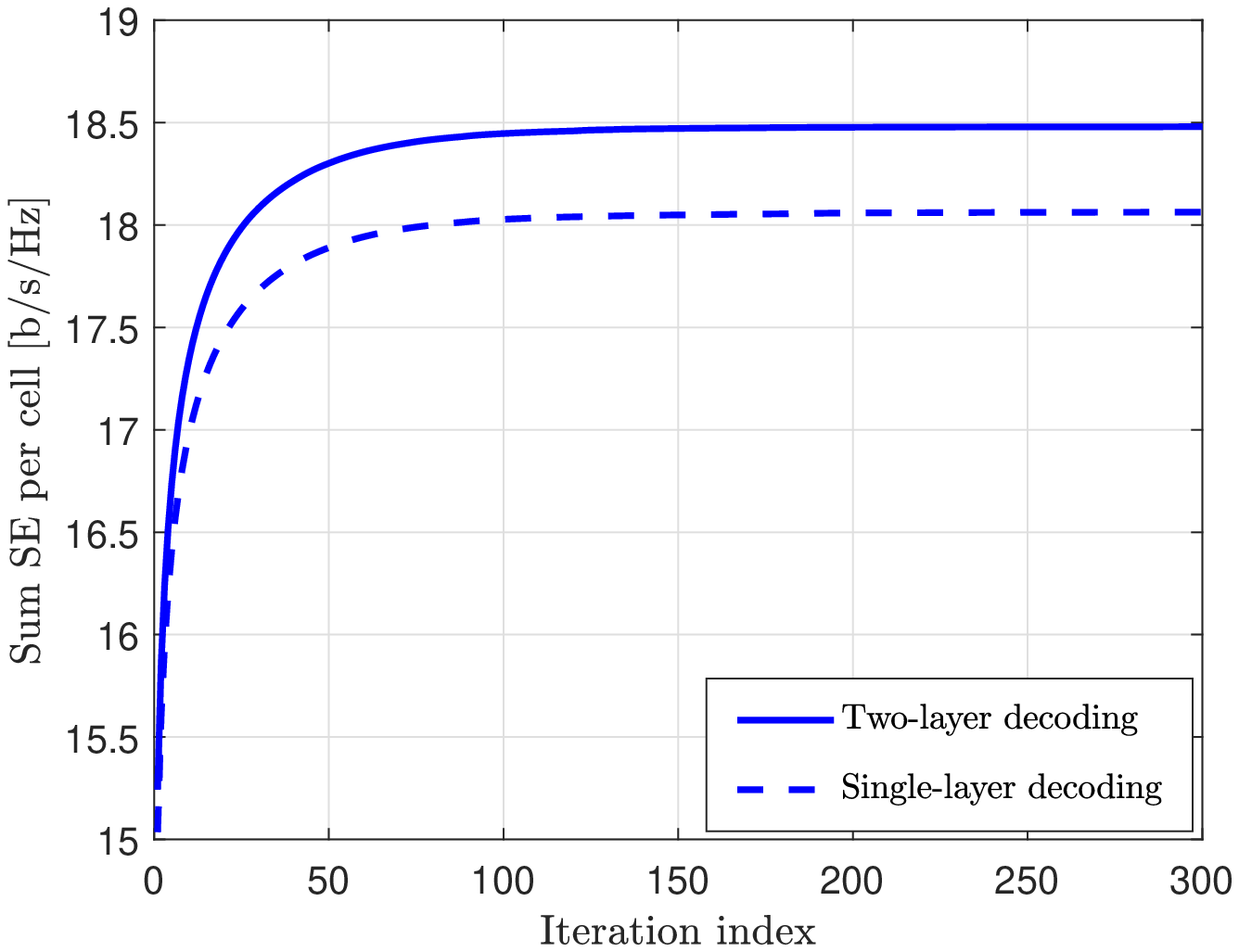} \vspace*{-0.55cm}
				\caption{Convergence of the proposed sum \abb{SE} optimization with $M=200$, $K= 5$, $\varsigma = 0.8$.}
				\label{Fig-Convergence}
				\vspace*{-0.3cm}
			\end{minipage}
			\hfill
			\begin{minipage}{0.32\textwidth}
			\centering
			\includegraphics[trim=0.25cm 0.0cm 1.0cm 0.5cm, clip=true, width=2.3in]{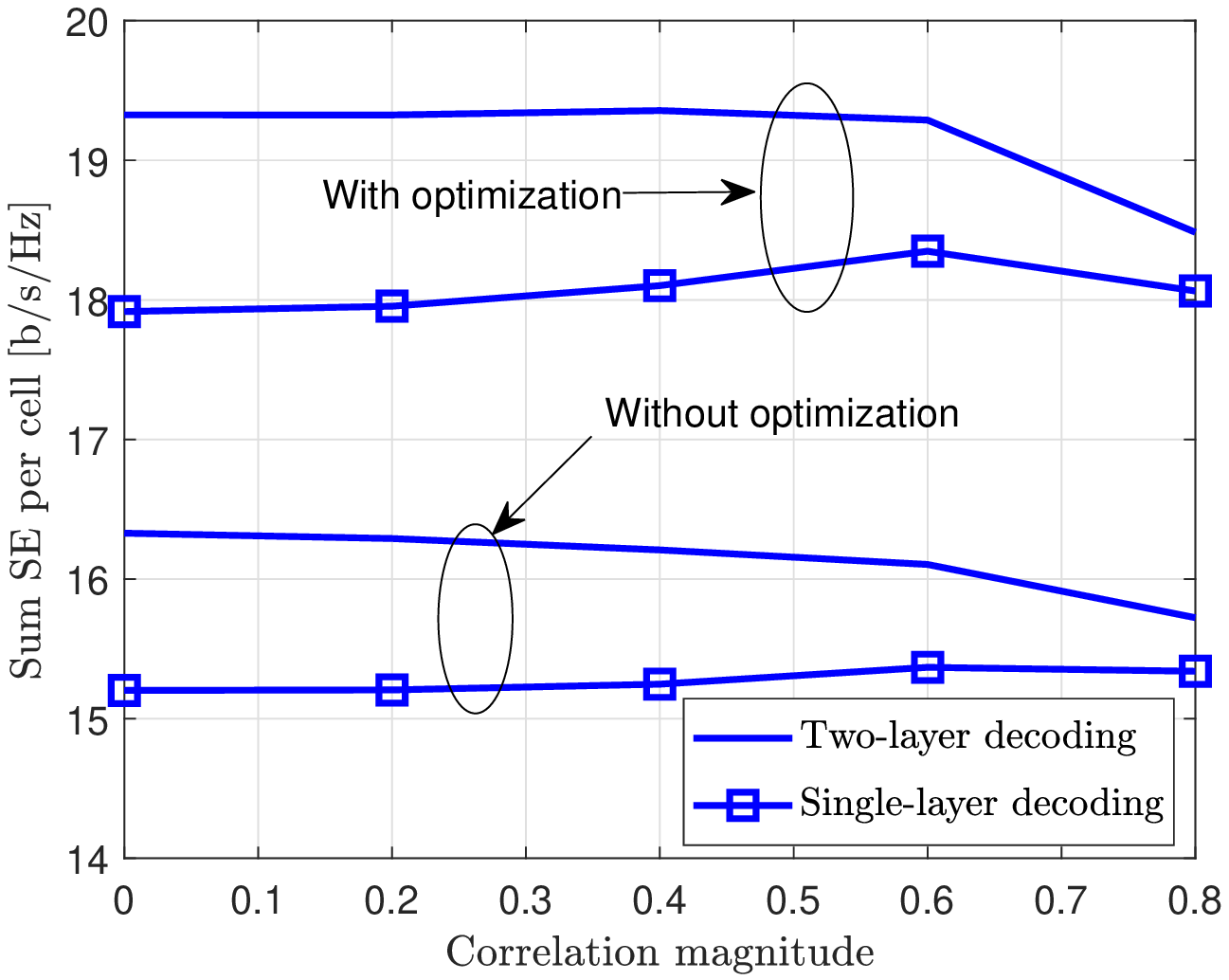} \vspace*{-0.55cm}
			\caption{Sum \abb{SE} per cell versus different correlation magnitudes with $M=200$, $K= 5$.}
			\label{Fig-MMSE-DiffCofact}
			\vspace*{-0.3cm}
		\end{minipage}
		\hfill
		\begin{minipage}{0.32\textwidth}
			\centering
			\includegraphics[trim=0.25cm 0.0cm 1.0cm 0.5cm, clip=true, width=2.3in]{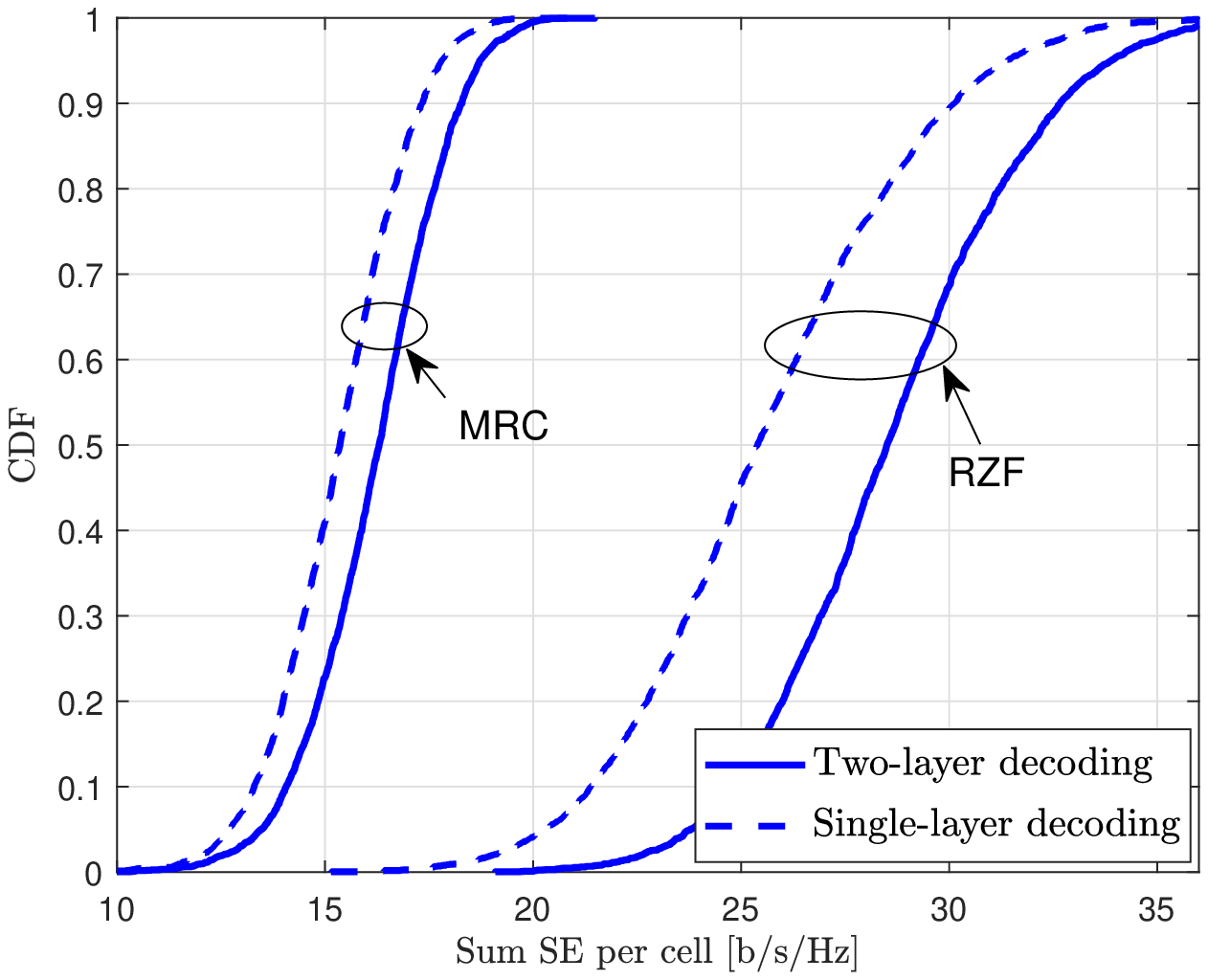} \vspace*{-0.55cm}
			\caption{\abb{CDF} of sum \abb{SE} per cell for \abb{MRC} and \abb{RZF} with $M=200, K =5,\varsigma = 0.5$.}
			\label{Fig-RZF}
			\vspace*{-0.3cm}
		\end{minipage}
		\vspace*{-0.05cm}
		\end{figure*}
		
		 We consider a wrapped-around cellular network with four cells.  The distance between user~$k$ in cell~$l'$ and BS~$l$ is denoted by $d_{l',k}^l$.  
		The users in each cell are uniformly distributed over the cell area that is at least 35\,m away from the BS, i.e., $d_{l',k}^l \geq \text{35\,m}$.
		Monte-Carlo simulations are carried out over $300$ random sets of user locations. We model the system parameters and large-scale fading similar to the 3GPP LTE specifications \cite{LTE2010b}. The system uses $20$\,MHz of bandwidth, the noise variance is $-96$\,dBm, and the noise figure is $5$\,dB.  The large-scale fading coefficient $\beta_{l,k}^{l'}$ is computed in decibel scale as $\beta_{l,k}^{l'}= -148.1 - 37.6 \log_{10}( d_{l,k}^{l'} / 1\,\text{km} )+ z_{l,k}^{l'},$
		where the decibel value of the shadow fading, $z_{l,k}$, has a Gaussian distribution with zero mean and standard derivation $7$. The spatial correlation matrix of the channel from user~$k$ in cell~$l$ to BS~$l'$ is described by the exponential correlation model, that models a uniform linear array with the correlation magnitude $\varsigma \in [0, \, 1]$ \cite{Loyka2001a}. {The correlation magnitude is multiplied with a unique phase-shift in every correlation matrix, selected as the user's incidence angle to the array}. We assume that the power is fixed to $200$\,mW for each pilot symbol and it is also the maximum power that each user can allocate to a data symbol, i.e., $P_{\max,l,k} = 200$\,mW. The following methods are compared in the simulation:
		\begin{itemize}
			\item[\textit{(i)}] \textit{Single-layer decoding system with fixed data power}: Each \abb{BS} uses \abb{MRC} for data decoding for the users in the own cell, and all users transmit data symbols with the same power $200$\,mW.  
			
			\item[\textit{(ii)}] \textit{Single-layer decoding system with data power control}: This benchmark is similar to \textit{(i)}, but the data powers are optimized using a modified version of Theorem~\ref{Theorem: IterativeSol}.
			
			\item[\textit{(iii)}] \textit{Two-layer decoding system with fixed data power and \abb{LSFD} vectors}: The network deploys the two-layer decoding as shown in Fig.~\ref{fig:decoder}, using \abb{MRC} and \abb{LSFD}. The data symbols are transmitted at the maximum power $200$\,mW and the \abb{LSFD} vectors are computed using Corollary~\ref{corollary:Opt_LSFD}. 
			
			\item[\textit{(iv)}] \textit{Two-layer decoding system with optimized data power and  \abb{LSFD} vectors}: This is the proposed method, where the data powers and \abb{LSFD} vectors are computed using the weighted \abb{MMSE} algorithm as in Theorem~\ref{Theorem: IterativeSol}.
		\end{itemize}
		
		Fig.~\ref{Fig-Convergence} shows the convergence of the proposed method for sum \abb{SE} optimization in Theorem~\ref{Theorem: IterativeSol}. 
		{From the initial random data powers, uniformly distributed in the feasible set}, updating the optimization variables gives improved sum \abb{SE} in every iteration. 
		For the two layer case \textit{(iv)}, the sum \abb{SE} per cell is about $22.2\%$ better at the stationary point than at the initial point. At convergence, \textit{(iv)} gives $2.4\%$ better sum \abb{SE} than \textit{(ii)}. The proposed optimization methods need around $100$ iterations to converge, but the complexity is low since every iteration in the algorithm consists of evaluating a closed-form expression.

		Fig.~\ref{Fig-MMSE-DiffCofact} shows the sum \abb{SE} per cell as a function of the channel correlation magnitude $\varsigma$ for a multi-cell Massive \abb{MIMO} system. 
		First, we observe the substantial gains in sum \abb{SE} attained by using \abb{LSFD}. The sum \abb{SE} increases with up to $7.5\%$ in the case of equally fixed data powers, while that gain is about $7.9\%$ for jointly optimized data powers and \abb{LSFD} vectors. Moreover, this figure shows that the performance is greatly improved when the data powers are optimized. The gain varies from $17.9\%$ to $20.3\%$. The gap becomes larger as the channel correlation magnitude increases. This shows the importance of doing joint data power control and \abb{LSFD} optimization in Massive \abb{MIMO} systems with spatially correlated channels.
		 
Fig.~\ref{Fig-RZF} compares the cumulative distribution function (\abb{CDF}) of the sum \abb{SE} per cell with either \abb{MRC} or \abb{RZF} in the first layer, where the latter requires the use of the new general SE expression in Theorem~\ref{Theorem1v1}. An equal power $200$ mW is allocated to each transmitted symbol. Because \abb{RZF} mitigates non-coherent interference effectively in the first layer, the second layer can increase the average SE by $11.80\%$. If \abb{MRC} is used in the first layer, the SE gain from using \abb{LSFD} using is only $5.84\%$. At the $95\%$-likely point, the two layer decoding system outperforms the single layer counterpart by $4.71\%$ and $17.35\%$ when using \abb{MRC} or \abb{RZF}, respectively.		
		
		\section{Conclusion} \label{Section:Conclustion}
		This paper has investigated the ability of \abb{LSFD} to mitigate inter-cell interference in multi-cell Massive \abb{MIMO} systems with  spatially correlated Rayleigh fading. \abb{LSFD} is a two-layer decoding method, where a second decoding layer to mitigate inter-cell interference is applied after the classical decoding. We derived new SE expressions support arbitrary spatial correlation and first-layer decoding. We used these expressions to optimize the data powers and \abb{LSFD} vectors, to maximize the sum \abb{SE} of the network. Even though the sum \abb{SE} optimization is a non-convex and NP-hard problem, we proposed an iterative approach to obtain a stationary point with low computational complexity. 
		Numerical results demonstrate the effectiveness of \abb{LSFD} in reducing pilot contamination with the improvement of sum \abb{SE} for each cell about $17\%$ in the tested scenarios, while the optimized data power control and \abb{LFSD} design can improve the sum \abb{SE} with more than $20\%$. The gains are larger when using \abb{RZF}  in the first layer than when using \abb{MRC}.
		
		\bibliographystyle{IEEEtran}
		\bibliography{IEEEabrv,refs}
	\end{document}